\documentclass[12pt,oneside,english]{amsart}
\usepackage[T1]{fontenc}
\usepackage[latin9]{inputenc}
\usepackage{geometry}
\geometry{verbose,tmargin=2cm,bmargin=2cm,lmargin=2cm,rmargin=2cm}
\setcounter{secnumdepth}{2}
\usepackage{color}
\usepackage{amsthm}
\usepackage{graphicx}
\usepackage{setspace}
\usepackage{esint}
\PassOptionsToPackage{normalem}{ulem}
\usepackage{ulem}
\doublespacing

\makeatletter
\numberwithin{equation}{section}
\numberwithin{figure}{section}
  \theoremstyle{plain}
  \ifx\thechapter\undefined
    \newtheorem{thm}{\protect\theoremname}
  \else
    \newtheorem{thm}{\protect\theoremname}[chapter]
  \fi
  \theoremstyle{remark}
  \ifx\thechapter\undefined
    \newtheorem{rem}{\protect\remarkname}
  \else
    \newtheorem{rem}{\protect\remarkname}[chapter]
  \fi

\usepackage{lscape}
\usepackage{lineno}

\makeatother

\usepackage{babel}
  \providecommand{\remarkname}{Remark}
\providecommand{\theoremname}{Theorem}

\begin{document}

\title{Generalization of Carey's Equality and A Theorem on Stationary Population}

\maketitle
(Appeared in \emph{Journal of Mathematical Biology) }(\textbf{Springer})

\begin{center}

\textbf{Arni S.R. Srinivasa Rao (Corresponding Author)}

Georgia Regents University

1120 15th Street, Augusta, GA 30912, USA

Email: arrao@gru.edu

\vspace{0.6cm}

\textbf{James R. Carey}

Department of Entomology

University of California, Davis, CA 95616 USA 

and 

Center for the Economics and Demography of Aging

University of California, Berkeley, CA 94720

Email: jrcarey@ucdavis.edu

\end{center}

\tableofcontents{}

\pagebreak{}
\begin{abstract}
Carey's Equality pertaining to stationary models is well known. In
this paper, we have stated and proved a fundamental theorem related
to the formation of this Equality. This theorem will provide an in-depth
understanding of the role of each captive subject, and their corresponding
follow-up duration in a stationary population.\textbf{ }We have demonstrated
a numerical example of a captive cohort and the survival pattern of
medfly populations. These results can be adopted to understand age-structure
and aging process in stationary and non-stationary population population
models. \textbf{Key words:} Captive cohort, life expectancy, symmetric
pattern\textbf{s. }
\end{abstract}

\section{Introduction}

The motivation to explore the question from which Carey's Equality
ultimately emerged stemmed from the lack of methods for estimating
age structure in insect populations in particular, and in a wide range
of animal populations generally. Although various methods exist for
estimating the age of individual insects, including the use of mechanical
damage, chemical analysis and gene expression, many are costly, most
require major training and calibration efforts and none are accurate
at older ages. Because of these technical constraints, our concept
was to explore the possibility of using the distribution of remaining
lifespans of captured insects (e.g. fruit flies) to estimate age structure
statics and dynamics. The underlying idea was that the death distribution
of all standing populations must necessarily be related to the population
from which it is derived, \emph{ceteris perabus}. 

James Carey first observed the symmetric survival patterns of a captive
cohort and follow-up cohort of medflies (see for example in \cite{Muller2004,Carey2008}).
Later it was referred to as Carey's Equality \cite{Vaupel2009} although
a formal mathematical statement on a generalized Equality of Carey's
type was missing. The phenomena of Equality of mean length of life
lived by a cohort of individuals up to a certain time, and the mean
of their remaining length of life in a stationary population, was
well established \cite{Kim (SIAM),Goldstein}. Proof of Carey's Equality
in \cite{Vaupel2009} is based on the stationary population property
explained in \cite{Goldstein} although similar phenomena were also
found to be useful in understanding the renewal process,\textbf{ }see
Chapter 5 in \cite{Cox}. We have formally stated a fundamental theorem
to justify Carey's Equality under a stationary population model, which
would allow us to obtain expected lifetimes of a captive cohort of
subjects in which each subject is captured at a different age, starting
at birth. Further, we have proved our statement under a certain combinatorial
and stationary population modeling set-up. These results are extended
to two dimensions for estimating age-structure of wild populations
and understanding internal structure of aging process, age-structure
of wild populations \cite{Rao&Carey2014}. One also needs to understand
the implications of our results for estimating the age structure in
stable populations \textbf{\cite{Rao-2014-Notices}.}

\section{Main Theorem}
\begin{thm}
\label{Main Theorem}Suppose $\left(X,Y,Z\right)$ is a triplet of
column vectors, where $X=\left[x_{1},x_{2},\cdots,x_{k}\right]^{T},$
$Y=\left[y_{1},y_{2},\cdots,y_{k}\right]^{T}$, $Z=\left[z_{1},z_{2},\cdots,z_{k}\right]^{T}$
representing capture ages, follow-up durations, and lengths of lives
for $k-$subjects, respectively. Suppose, $F\left(Z\right)$, the
distribution function of $Z$ is known and follows a stationary population.
Let $G_{1}$ be the graph connecting the co-ordinates of $S_{Y}$,
the survival function whose domain is $N(k)=\{1,2,3,...,k\}$ i.e.
the set of first $k$ positive integers and $S_{Y}\left(j\right)=y_{j}$
for $j=1,2,...,k.$ Let $G_{2}$ be the graph connecting the co-ordinates
of $C_{X},$ the function of capture ages whose domain is $N(k)$
and $C_{X}\left(j\right)=x_{j}$ for all $j=1,2,...,k.$ Suppose $C_{X}^{*}\left(-j\right)=x_{j}$
for all $j=1,2,...,k.$ Let $\mathcal{H}$ be the family of graphs
constructed using the co-ordinates of $C_{X}^{*}$ consisting of each
of the $k!$ permutations of graphs. Then one of the members of $\mathcal{H}$
$(say,$$H_{g}$) is a vertical mirror image of $G_{1}.$ \end{thm}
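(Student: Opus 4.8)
The plan is to reduce the geometric assertion about a ``vertical mirror image'' to a purely combinatorial equality of multisets, and then to extract that equality from the stationarity of $F(Z)$. First I would pin down the coordinates involved. The graph $G_1$ consists of the points $(j, S_Y(j)) = (j, y_j)$ for $j \in N(k)$, so its reflection across the vertical axis is the point set $\{(-j, y_j) : j \in N(k)\}$. On the other side, since $C_X^*(-j) = x_j$ and $\mathcal{H}$ ranges over all $k!$ permutations of these coordinates, a typical member $H_g \in \mathcal{H}$ is the point set $\{(-j, x_{\sigma(j)}) : j \in N(k)\}$ for some permutation $\sigma$ of $N(k)$. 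Thus $H_g$ coincides with the vertical mirror image of $G_1$ precisely when $x_{\sigma(j)} = y_j$ for every $j$.

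Next I would observe that such a $\sigma$ exists if and only if the multiset $\{x_1, \ldots, x_k\}$ equals the multiset $\{y_1, \ldots, y_k\}$: if the multisets agree, any bijection matching equal entries supplies the required $\sigma$ (and hence the index $g$), while conversely $x_{\sigma(j)} = y_j$ forces the multisets to coincide. This turns the theorem into the statement that the capture ages and the follow-up durations are rearrangements of one another.

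To establish that multiset equality I would use the defining relation $z_j = x_j + y_j$ together with the stationary population hypothesis on $F(Z)$. The captive cohort is, by construction, sampled across the entire age range starting at birth, so within each realized lifespan level $z$ the pair $(\text{capture age},\ \text{remaining duration}) = (x,\, z - x)$ is governed by the within-lifespan uniform profile that characterizes a stationary population. The reflection $x \mapsto z - x$ is then a measure-preserving involution on each level set, and assembling these involutions over all realized lifespans yields a single bijection of $N(k)$ that carries each capture age onto a follow-up duration. This bijection is exactly the permutation $\sigma$ sought above, so the corresponding index $g$ names a member $H_g$ of $\mathcal{H}$ equal to the mirror image of $G_1$.

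The main obstacle I anticipate is this last step: upgrading the stationary-population symmetry from an equality of distributions to an exact, finite rearrangement. Carey's Equality in its classical form asserts that capture ages and remaining lifetimes agree in distribution (or in mean), whereas the present theorem demands an actual index-by-index matching. Making this precise requires using the stationarity of $F(Z)$ to guarantee that the finite captive cohort realizes the reflection exactly---that the sampling is symmetric enough for $x \mapsto z - x$ to be a genuine involution on the finite data rather than merely approximately so. Once that is secured, the permutation $\sigma$ is well defined and $H_g$ is the vertical mirror image of $G_1$, completing the argument.
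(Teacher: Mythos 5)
Your reduction---reflect $G_1$ to the point set $\{(-j,y_j)\}$, observe that some $H_g$ coincides with it exactly if and only if the capture ages and follow-up durations form equal multisets, and then extract that equality from the stationary-population reflection $x\mapsto z-x$---is essentially the paper's own argument, which orders the $y_j$ into the set $S$, and selects the permutation $g$ for which $\left|y_{t_j}-x_j^{(g)}\right|=0$ for every $j$, justifying the existence of such a matching by Vaupel's equality $f_1(a)=f_2(a)$ of the densities of life lived and life left. The obstacle you flag at the end---upgrading the distributional symmetry to an exact index-by-index rearrangement of the finite cohort---is precisely the step the paper also resolves only by that appeal, so your proposal follows the same route and is no less complete than the published proof.
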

\begin{proof}
\textcolor{black}{In the hypotheses, the information on capture ages,
follow-up durations, and lengths of lives are given as column vectors.}\textbf{\textcolor{black}{{}
}}\textcolor{black}{For analyzing data collected from life table studies
on different organisms (including life expectancy), column vectors
are used for representing a convenient data structure, where the matrix
of interest has only one column}\textbf{\textcolor{black}{.}}\textcolor{black}{{}
Capture ages of all subjects in our paper are arranged in a column
such that it will be suitable to do addition operations on these ages
with a column vector consisting of lives left for corresponding subjects.
One also can compute inner product of these two column vectors, but
the resultant scalar obtained by such operation is not of interest
in the present context. Life table analysis in demography considers
column-wise information and each column has a special purpose even
though the information obtained from the last column has the key data
needed for actuarial modeling, mortality analysis, etc, Using these
column vectors of capture ages and follow-up durations from the hypothesis,
we have constructed two sets, $I_{1}$ and $I_{2}$.}\textcolor{magenta}{{} }

Let $I_{1}=\left\{ \left(i,y_{i}\right):1\leq i\leq k\right\} $ and
$I_{2}=\left\{ \left(i,x_{i}\right):1\leq i\leq k\right\} .$ $G_{1}$
is constructed using specific ordered pairs of $I_{1}$, explained
later in the proof, and $G_{2}$ is constructed by corresponding ordered
pairs of $I_{2}.$ There are two criteria, $U_{1}$ and $U_{2},$
that govern the one-to-one correspondence properties of the two functions,
$S$ and $C_{X}$. 

\begin{eqnarray*}
U_{1} & = & \mbox{\ensuremath{\left\{  \mbox{No two subjects are captured who are of the same age and}\right.} }\\
 &  & \mbox{\ensuremath{\left.\mbox{there are no two subjects whose follow-up times until death are identical}\right\} } . }
\end{eqnarray*}

\begin{eqnarray*}
U_{2} & = & \mbox{\ensuremath{\left\{  \mbox{There is more than one subject which has the same capture ages, and}\right.} }\\
 &  & \mbox{\ensuremath{\left.\mbox{there is more than one subject which has the same follow-up durations until death}\right\} } . }
\end{eqnarray*}

Suppose $f_{1}$ and $f_{2}$ are probability density functions of
capture times and follow-up times to death, then by \cite{Vaupel2009},
we will have $f_{1}(a)=f_{2}(a)$, which is the probability of an
individual who lived $'a'$ years is the same as probability that
this individual lives $'a'$ years during the follow-up \cite{Vaupel2009}.
For an infinite population in a continuous time set-up for each $x\in C_{X}$
there exists a $y\in S_{Y}$ such that 

\begin{eqnarray}
\int_{0}^{\infty}x(s)ds+\int_{0}^{\infty}y(s)ds & = & \int_{0}^{\infty}z(s)ds\label{eq:x+y=00003Dz}\\
\nonumber 
\end{eqnarray}

and 

\begin{eqnarray}
x_{A} & =y_{A} & =\frac{1}{2}\int_{0}^{\infty}z(s)ds,\label{eq:xA}\\
\nonumber 
\end{eqnarray}

where $x_{A}$ and $y_{A}$ are the average capture ages and average
follow-up durations until death. Sum of the age at capture of $i^{th}-$
subject (i.e. $x_{i}$) and the follow-up duration (or time left in
the current context) of $i^{th}-$subject (i.e. $y_{i}$) is equal
to the total length of the life (i.e. $z_{i}$) for the $i^{th}-$subject.
$ $\textbf{ }See\textbf{ \cite{Vaupel2009,Ryder} }and\textbf{ }Chapter
3 in \cite{Preston et al} for a description of stationary population
models and\textbf{ }see page 49 in \cite{Goswami&Rao-book} and see
page 52 in \cite{Lawler}\textbf{ }for the stationary distributions.\textbf{ }

Let's define $\max\left\{ y_{i}\right\} _{i=1}^{k}=y_{t_{1}}$ for
some $t_{1}^{th}-$ subject out of $k-$subjects,

$\max\left\{ y_{i}:\mbox{for all }i\in N(k)\mbox{ and }i\neq t_{1}\right\} =y_{t_{2}}$
for some $t_{2}^{th}-$subject out of $k-1$ subjects and so on until
we arrive at, 

$\max\left\{ y_{i}:\mbox{for all }i\in N(k)\mbox{ and }i\neq t_{1},t_{2},...,t_{(k-1)}\right\} =y_{t_{k}}$.

$G_{1}$ is constructed by joining the co-ordinates of the set $S$
on the first quadrant, where

\begin{eqnarray}
S & = & \left\{ \left(t_{1},y_{t_{1}}\right),\left(t_{2},y_{t_{2}}\right),\cdots,\left(t_{k},y_{t_{k}}\right)\right\} .\label{eq:S}\\
\nonumber 
\end{eqnarray}

We call each co-ordinate of $S$ as a cell of $S.$ \textcolor{black}{One
can visualize, cells in S are made up of ordered pair of co-ordinates,
where abscissa is the subject captured and ordinate is the life left
for this subject after capture. Here, the graph, we mean by a curve
obtained by joining the cells in $S$. Each cell, except the first
and the last cell of S are joined to both sides of its neighboring
cells.} In case there is more than one subject with a maximum value
at one or more stage above, it will lead to two or more identical
co-ordinates that are used in $G_{1}.$ When $S$ satisfies $U_{1},$
$G_{1}$ is a graph of decreasing\textbf{ }function\textbf{; }when
$S$ is satisfies criterion $U_{2},$ $G_{1}$ is a\textbf{ }graph
of combination of decreasing and non-decreasing functions\textbf{. }

We can construct a sequence of quantities $x_{u_{1}},$$x_{u_{2}},\cdots,$$x_{u_{k}}$
similarly to $S$ to form the set $T$, given below:

\begin{eqnarray}
T & = & \left\{ \left(u_{1},x_{u_{1}}\right),\left(u_{2},x_{u_{2}}\right),\cdots,\left(u_{k},x_{u_{k}}\right)\right\} \label{eq:T}
\end{eqnarray}
Corresponding to $\left(t_{1},y_{t_{1}}\right)\in S$ there exists
a $\left(j,x_{j}\right)\in I_{2}$ which could be $\left(u_{1},x_{u_{1}}\right)$
or some other cell in $T.$ Note that, $C_{X}^{*}=\left\{ \left(-j,x_{j}\right):1\leq j\leq k\right\} $.
Corresponding to $\left(t_{1},y_{t_{1}}\right)\in S$ if there exists
a cell $\left(\left|-i\right|,x_{i}\right)\in C_{X}^{*}$ such that
$\left|t_{1}-\left|-i\right|\right|=0$ and $\left|y_{t_{1}}-x_{i}\right|=0$
then we call $\left(t_{1},y_{t_{1}}\right)$ and $\left(\left|-i\right|,x_{i}\right)$
are a pair of equidistant cells from a vertical axis. We can construct
a graph $H_{1}$ using cells of $C_{X}^{*}$ in the natural order
of integers. In total, we can have $k!$ permutations of orders to
construct $H_{1,\cdots,}H_{k!}$. Suppose we denote the first combination
of cells as $\left\{ \left(-j^{(1)},x_{j}^{(1)}\right):1\leq j\leq k\right\} $
and the second combination of cells as $\left\{ \left(-j^{(2)},x_{j}^{(2)}\right):1\leq j\leq k\right\} $
and so on. \textcolor{black}{Negative index here indicates that the
subject captured is considered on the negative $x-$axis, which is
similar to the left part of the Figure}\textbf{\textcolor{magenta}{{}
\ref{main figure},}}\textcolor{black}{{} if we consider the values
25, 50 75 and 100 for pre-capture segment as -25, -50, -75 and -100
as we visualize them on the negative $x-$axis.} Thus by previous
arguments, a family of graphs $\mathcal{H}$ is constructed. One of
these combinations, for example, the $g^{th}$ combination, is used
to construct a graph which we denote with $H_{g},$ which satisfies
following Equality:

\begin{eqnarray*}
\left|t_{1}-\left|-1\right|^{(g)}\right|=0 & \mbox{and} & \left|y_{t_{1}}-x_{1}^{(g)}\right|=0\\
\left|t_{2}-\left|-2\right|^{(g)}\right|=0 & \mbox{and} & \left|y_{t_{2}}-x_{2}^{(g)}\right|=0\\
\vdots &  & \vdots\\
\left|t_{k}-\left|-1\right|^{(g)}\right|=0 & \mbox{and} & \left|y_{t_{k}}-x_{k}^{(g)}\right|=0
\end{eqnarray*}

here $\left\{ \left(\left|-1\right|^{(g)},x_{1}^{(g)}\right),\left(\left|-2\right|^{(g)},x_{2}^{(g)}\right),\cdots,\left(\left|-k\right|^{(g)},x_{k}^{(g)}\right)\right\} $
are $g^{th}$ combination of cells such that $H_{g}$ is a vertical
mirror image of $G_{1}.$ \textcolor{black}{Image of $G_{1}$ is visualized
as if it is seen from the mirror kept on $y$-axis. Note that, we
have generated $k!$ graphs by our construction, and one of such graph,
which we called as $H_{g}$ is shown to have vertical mirror image
of $G_{1}$.}
\end{proof}
\textcolor{black}{Our theorem establishes existence of graph depicting
mirror images of the pre- and post-capture longevity distributions
in Figure}\textbf{\textcolor{black}{{} \ref{main figure}}}\textcolor{black}{.
In general, in the area of population biology where survivorship curves
of captive cohort obey Carey's Equality, there is a need for understanding
the pattern within the data structure. The results by \cite{Vaupel2009,Goldstein}
explain functional symmetries as an application of renewal theory.
We do not depend on any of the classical works on renewal equation
and renewal theory proposed by Lotka \cite{Lotka-1909,Lotka-1939a,Lotka-1939b,Lotka-book}
and by Feller \cite{Feller-1941} (also see Chapter XI in \cite{Feller-book}).
Our inspiration is purely from experimental observations demonstrated
by James Carey and a statement on stationary populations in the equation
(1) in the paper by Vaupel \cite{Vaupel2009}. However, our theory
and method of proof uses sequentially arranged data of captive individuals,
which was also usually done in renewal theory analysis or proving
renewal-type of equations. }

\textcolor{black}{Hence our method provide an alternative and independent
approach for such kind of sequentially arranged captive subjects.}\textbf{\textcolor{black}{{}
}}\textcolor{black}{Besides relating the captive ages and corresponding
follow-up durations of subjects in a stationary population, the principles
of the theorem helped to visualize person-years in a follow-up starting
at birth, in a stationary population model in which subjects of each
age are captured. }

\subsection{Life expectancy }

First we construct the age structured survival function using $l(s)$,
the number of captured subjects at age $s$ and $l(s_{0})$, the number
of subjects at the beginning of the follow-up. Suppose $s_{0}$ is
the time at the beginning of the follow-up, $s_{i}$ is the time at
the $i^{th}$ time point of observation for $i=1,2,\cdots,k$. Suppose
each of the $y_{t_{i}}$ for $i=1,2,\cdots,k$ fall exactly in one
of the time intervals $\left(s_{i-1},s_{i}\right)$, then the expectancy
of life is $\frac{k+1}{2}$. The probability of death over the time
period are $q(s_{i})=\frac{1}{k-i}$ with the survival pattern, $l(s_{i})=k-i$
for $i=1,2,\cdots,k-1$ and $l(s_{k})=0$, $q(s_{k})=1.$ Suppose
there are $n_{i}$ number of $y_{t_{i}}$ falling within $\left(s_{i-1},s_{i}\right)$
such that $\sum_{i=1}^{k}n_{i}=k$ and $l(s_{i})$ follows the previous
construction. If $n_{i}>1$ in one or more of the $\left(s_{i-1},s_{i}\right)$
then there must exist empty cells where the event of death is avoided.
Let us define a number $c_{1}$ as follows:

\begin{eqnarray*}
c_{1} & = & \mbox{\ensuremath{\left\{  \mbox{Number of cells }\ensuremath{(s_{i-1},s_{i})}\mbox{ for }\mbox{ }\ensuremath{i=1,2,\cdots,k}\mbox{ where exactly one of the }\mbox{ }y_{t_{i}}\mbox{ falls }\right\} } }\\
\end{eqnarray*}

If $c_{1}=k$, then the life expectancy is $\frac{k+2}{2}$. If $c_{1}\neq k$,
then at least one of the $(s_{i-1},s_{i})$ is empty. There could
be several combinations of distribution of $y_{t_{1}}$ in $(s_{i-1},s_{i})$
when the event $c_{1}\neq k$ occurs, and we explain in the following
remark one possible situation in which deaths are concentrated in
the early ages and at late ages. Other combinations can be evaluated
using similar constructions.
\begin{rem}
Suppose $n_{1}=n_{2}=\cdots=n_{r_{1}}=1$; $n_{r_{1}+1}=n_{r_{1}+2}=\cdots=n_{r_{2}}=0$;
and $n_{r_{2}+1}=n_{r_{2}+2}=\cdots=n_{k}=1$ such that $\sum_{i=1}^{k}n_{i}=k$
for some $1<r_{1}<r_{2}<k.$ The probability of death at various ages,
$q(s_{i})$, is,

\begin{eqnarray}
q(s_{i}) & = & \left\{ \begin{array}{cc}
\frac{1}{k-i} & \mbox{ for }i=0,1,\cdots,(r_{1}-1)\\
\\
0 & \mbox{ for}\mbox{\ensuremath{}}i=r_{1},(r_{1}+1),\cdots,(r_{2}-1)\\
\\
\frac{1}{k-(r_{1}+i)} & \mbox{ for }i=0,1,\cdots(k-2)\\
\\
1 & \mbox{ for }i=(k-1)
\end{array}\right.\label{eq:q(si)}\\
\nonumber 
\end{eqnarray}

The life expectancy, $e(s_{0})$, is obtained by the formula below:

\begin{eqnarray}
e(s_{0}) & = & \frac{1}{2k}\left[\sum_{i=1}^{2r_{1}-1}\left\{ 2k-i\right\} +\sum_{j=2r_{1}+1}^{2k-1}\left\{ 2k-j\right\} \right]+\frac{\left(k-r_{1}\right)\left(r_{2}-r_{1}-1\right)}{k}\nonumber \\
\label{eq:e(c0)}
\end{eqnarray}

\end{rem}

\subsection{Person-years and means}

Under the above set-up, in this section we derive the mean age of
the captive cohort in terms of the mean of the person-units followed.
Suppose $a(k,x)$ denotes $k^{th}-$subject ($k>0)$ is captured at
age $x$, then $\int_{1}^{\infty}a(k,x)dk$ is number of subjects
captured at age $x.$ The mean age at capture for all subjects of
the cohort formed of subjects of all ages, $c(0)$, is

\begin{eqnarray}
c(0) & = & \frac{\int_{0}^{\infty}\left[x\int_{1}^{\infty}a(k,x)dk\right]dx}{\int_{0}^{\infty}\int_{1}^{\infty}a(k,x)dkdx}.\label{eq:capture mean age}\\
\nonumber 
\end{eqnarray}

Let $\int_{0}^{s}b(y)dy$ be the number of deaths out of $\int_{0}^{\infty}\int_{1}^{\infty}a(k,x)dkdx$
during the age $0$ to $s$. The number of subjects surviving at age
$s$ is 

\begin{eqnarray*}
 &  & \int_{0}^{\infty}\int_{1}^{\infty}a(k,x)dkdx-\int_{0}^{s}b(y)dy\\
\end{eqnarray*}

and the number of subjects surviving at age $(n+1)s$ for some positive
integer $n$ is 

\begin{eqnarray*}
 &  & \int_{0}^{\infty}\int_{1}^{\infty}a(k,x)dkdx-\sum_{n=0}^{\infty}\int_{ns}^{(n+1)s}b(y)dy.\\
\end{eqnarray*}

Thus, life expectancy at the formation stage of a captive cohort\textbf{,}
say, $E[c(0)]$ can be computed by the formula

\begin{eqnarray}
E[c(0)] & = & \frac{\int_{0}^{\infty}\int_{0}^{\infty}\int_{1}^{\infty}a(k,x)dkdxds-\int_{0}^{\infty}\left(\sum_{n=0}^{\infty}\int_{ns}^{(n+1)s}b(y)dy\right)ds}{\int_{0}^{\infty}\int_{1}^{\infty}a(k,x)dkdx}\label{eq:Life Exp (followup)}\\
\nonumber 
\end{eqnarray}
We state a theorem for the total person-years to be lived by all the
subjects in a birth cohort. 
\begin{thm}
\label{Second-Theorem}Suppose subjects of each age of life in a population
are captured. Then, using the constructions in $c(0)$ and $E[c(0)]$,
the total person-years, say, $T\left(a,c(0),E[c(0)]\right)$, that
will be lived by newly born subjects in a stationary population can
be expressed as 
\begin{eqnarray*}
T\left(a,c(0),E[c(0)]\right) & = & \left(c(0)+E[c(0)]\right)\int_{0}^{\infty}\int_{1}^{\infty}a(k,x)dkdx.\\
\end{eqnarray*}
\end{thm}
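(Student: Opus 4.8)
The plan is to read Theorem \ref{Second-Theorem} as the integral (person-years) incarnation of Carey's Equality, resting entirely on the pointwise additivity $z_i = x_i + y_i$ recorded in equation \eqref{eq:x+y=00003Dz}: for each subject the total length of life is the age at capture plus the follow-up duration. Consequently the total person-years lived by the cohort splits cleanly into a pre-capture contribution and a post-capture contribution, and each of these will be expressed as a cohort mean multiplied by the common head-count $N := \int_0^\infty \int_1^\infty a(k,x)\, dk\, dx$, the total number of captured subjects across all ages.

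First I would compute the pre-capture person-years. Multiplying the definition of the mean capture age $c(0)$ in \eqref{eq:capture mean age} by $N$ gives
\begin{equation*}
c(0)\, N = \int_0^\infty x\left[\int_1^\infty a(k,x)\, dk\right] dx,
\end{equation*}
which is precisely the aggregate age lived up to the capture instants, that is, the person-years accumulated before capture.

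Next I would identify the post-capture person-years with $E[c(0)]\, N$. The numerator of \eqref{eq:Life Exp (followup)} is to be read as the area under the post-capture survival curve, $\int_0^\infty \bigl[\,N - \sum_{n=0}^\infty \int_{ns}^{(n+1)s} b(y)\, dy\,\bigr] ds$, using the survivor count derived just before that formula; multiplying $E[c(0)]$ by $N$ returns this area, which is the total follow-up person-years. Adding the two contributions and invoking \eqref{eq:x+y=00003Dz} subject-by-subject then yields
\begin{equation*}
T\bigl(a,c(0),E[c(0)]\bigr) = c(0)\, N + E[c(0)]\, N = \bigl(c(0)+E[c(0)]\bigr)\, N,
\end{equation*}
which is the asserted identity.

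The hard part will be the rigorous justification of the continuous representation rather than the bookkeeping. Specifically, one must apply Fubini's theorem to move the outer $\int_0^\infty ds$ past the double integral in the numerator of \eqref{eq:Life Exp (followup)}, and one must verify that the tail sum $\sum_{n=0}^\infty \int_{ns}^{(n+1)s} b(y)\, dy$ telescopes into the cumulative-death count so that the bracketed quantity is genuinely the number of survivors at follow-up time $s$. Finally I would check the integrability hypotheses, namely that $\int_1^\infty a(k,x)\, dk$ is finite at each age $x$ and that the resulting means converge, so that $N < \infty$ and both $c(0)$ and $E[c(0)]$ are well defined; the stationarity assumption on $F(Z)$ is what guarantees these finiteness conditions.
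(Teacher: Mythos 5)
Your proposal is correct and follows essentially the same route as the paper: multiply each of $c(0)$ and $E[c(0)]$ by the common head-count $N=\int_{0}^{\infty}\int_{1}^{\infty}a(k,x)\,dk\,dx$ to recover the numerators of \eqref{eq:capture mean age} and \eqref{eq:Life Exp (followup)}, interpret these as pre-capture and post-capture person-years respectively, and add. The paper's own proof is just this algebraic expansion followed by the verbal identification of the regrouped right-hand side with $T$; your added remarks on Fubini, the telescoping of the death-count sum, and integrability are refinements the paper leaves implicit rather than a different argument.
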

\begin{proof}
We have,

\begin{eqnarray}
\left(c(0)+E[c(0)]\right)\int_{0}^{\infty}\int_{1}^{\infty}a(k,x)dkdx & = & \int_{0}^{\infty}\left[x\int_{1}^{\infty}a(k,x)dk\right]dx+\nonumber \\
 &  & \int_{0}^{\infty}\int_{0}^{\infty}\int_{1}^{\infty}a(k,x)dkdxds-\int_{0}^{\infty}\left(\sum_{n=0}^{\infty}\int_{ns}^{(n+1)s}b(y)dy\right)ds\nonumber \\
\nonumber \\
 & = & \int_{0}^{\infty}\left[x\int_{1}^{\infty}a(k,x)dk\right]dx+\nonumber \\
\nonumber \\
 &  & \int_{0}^{\infty}\left[\int_{0}^{\infty}\int_{1}^{\infty}a(k,x)dkdx-\sum_{n=0}^{\infty}\int_{ns}^{(n+1)s}b(y)dy\right]ds\label{eq:RHS in proof}
\end{eqnarray}

The R.H.S. of (\ref{eq:RHS in proof}) is sum of ages of subjects
of all ages in a captive cohort, and total person-years to be lived
by the captive cohort, which is $T\left(a,c(0),E[c(0)]\right)$.
\end{proof}

\section{History and Related Results}

An Equality arising out of symmetries of life lived and life left
was named as Carey's Equality by James Vaupel \cite{Vaupel2009},
to highlight the discovery of certain symmetries in his biodemographic
experiments by James Carey (see, for example, \cite{Muller2004,Carey2008}).
Vaupel \cite{Vaupel2009} and Goldstein \cite{Goldstein} have proved
equality on such symmetries as a direct application of renewal theory.
Our main theorem in this article is not inspired by renewal theory,
but we conceptualized our approach directly from experimental results
shown by James Carey and then used equation (1) from Vaupel \cite{Vaupel2009}
in our hypothesis. Renewal theory has long history even before the
seminal works on population dynamics by Alfred Lotka (see, for example,
\cite{Lotka-1909,Lotka-1939a,Lotka-1939b}), who has used an integral
equation of type (\ref{eq:renewal-lotka}) to link number of births
at time $t>0$ with number of births a women has at time $t=0.$ 

\textbf{
\begin{equation}
B(t)=\int_{0}^{t}G(t,a)da+H(t)\label{eq:renewal-lotka}
\end{equation}
}

where $B(t)$ is total number of births at time $t$, $G(t,a)$ is
number of births from a women who is at age $a$ and alive at time
$t$ and $H(t)$ is number of births from a women who is alive at
$t=0.$ Usually, we compute $B(t)$ from $a=\alpha$ to $a=\beta$,
where $\alpha$ is lower reproductive age and $\beta$ is upper reproductive
age. $G(t,a)$ can be written as,

\begin{equation}
G(t,a)=B(t-a)s(a)\label{eq:b(t-a)s(a)}
\end{equation}
where $B(t-a)$ is total number of births at time $(t-a)$ and $s(a)$
is chance of surviving to exact age $a.$ \ref{eq:renewal-lotka}
is also referred as renewal equation. Combining \ref{eq:renewal-lotka}
and \ref{eq:b(t-a)s(a)}, we can write $B(t)$ within reproductive
ages as,

\begin{equation}
B(t)=\int_{\alpha}^{\beta}B(t-a)s(a)da\label{eq:renewal}
\end{equation}

William Feller \cite{Feller-book} provided foundations of renewal
theory in his book (see Chapters VI and XI) as did authors of other
books which were written exclusively on renewal theory (see for example,
\cite{Cox}) or contained chapters devoted to basic renewal theory
(for example, see Chapter 6 in \cite{Lawler} and Chapter 12 in \cite{Karlin-book}).
Several applications of convolutions of independent random variables
which we see in renewal theory can also be found in understanding
disease progression between one stage to another stage, epidemic prediction
and so forth (for example, see \cite{Brookmeyer and Gail} and \cite{Rao-RMJM}).\textbf{ }

\section{Example and visualization}

We provide here a practical application based on the medfly population
with a visual depiction of Carey\textquoteright{}s Equality (Figure
\ref{main figure}). Note the symmetry of the pre- and post-capture
segments of the lifespans of individuals in the population that underlies
the equivalency of life-days which, in turn, underlies the equivalency
noted by \cite{Vaupel2009}, \textquotedbl{}If an individual is chosen
at random from a stationary population...then the probability the
individual is one who has lived a years equals the probability the
individuals is one who has that number of years left to live.\textquotedblright{}\textbf{ }

Carey's Equality is important because it both builds on and complements
the properties of one of the most important models in demography:
the stationary population model. Stationary population model is fundamental
to formal demography because it is a special case of the stable population
model, and provides explicit expressions that connect the major demographic
parameters to one another, including life expectancy, birth rates,
death rates and age structure, see Chapter 3 in \cite{Preston et al}.
A graphical depiction of Carey's Equality, shown in Figure \ref{fig2},
shows the interconnectedness of these parameters. Note that the shape
of the stationary population \ref{fig2}(a) as well as both its age
and death distributions \ref{fig2}(b) are identical, and that the
proportion of each age class in the captive population \ref{fig2}(c)
is identical to the proportions within the whole population (i.e.
due to the symmetrical distributions of pre- and post-capture lifespans
in \ref{fig2}(b)).

\section{\textcolor{black}{Discussion}}

\textcolor{black}{The approach we used in this paper to demonstrate
the mathematical identity underlying Carey's Equality is fundamentally
different than that used by previous authors. Originally James Carey
created a simple life table model (Table 1 in \cite{Muller2004})
to demonstrate the equality of age structure and post-capture deaths
in a stationary population, the identity of which was then formalized
by statisticians Hans Müller and Jane-Ling Wang \cite{Muller2004}.
Vaupel \cite{Vaupel2009} and Goldstein \cite{Goldstein} followed
by using mathematical first principles to derive the equality. }

\textcolor{black}{Our approach differs from the one we just outlined.
Instead of first formulating and then deriving the equality, we justified
the constancy of post-capture patterns of death in stationary medfly
populations in the following two steps. The first was to conjecture
that in stationary populations the ordered pre- and post-capture life
course segments of individuals will be symmetrical (Theorem \ref{Main Theorem})
and that total person-years in a captive cohort can be formulated
(Theorem \ref{Second-Theorem}). The second was to prove these relationships
using a series of mathematical assumptions (i.e. the respective proofs). }

\textcolor{black}{Our approach contributes to the demographic literature
in general and specifically to an understanding of Carey's Equality
in several ways. First, our theorems provide an independent method
for formulating a mathematical relationship in formal demography.
Indeed we are unaware of any other models in formal demography that
have involved proofs from mathematical conjectures (theorems). Second,
our proofs allow us to state unequivocally that the Carey Equality
will be true in all stationary populations. Although this is a logical
outcome from all of the earlier approaches, our approach makes this
result both explicit and conclusive. Third, our proofs required that
we draw on set theory, an area of mathematics involving logic that
is not commonly used in demography. As a consequence of the problem
framed in fundamental mathematics, Carey's Equality is better situated
both to draw from and be extended into other areas of basic mathematical
theory. Fourth, using the ideas of the main theorem, we are positioned
to obtain further results related to Carey's Equality such as for
higher dimensions and probabilistic and deterministic results for
multiple captive cohorts. }

\begin{figure}
\includegraphics[scale=0.7]{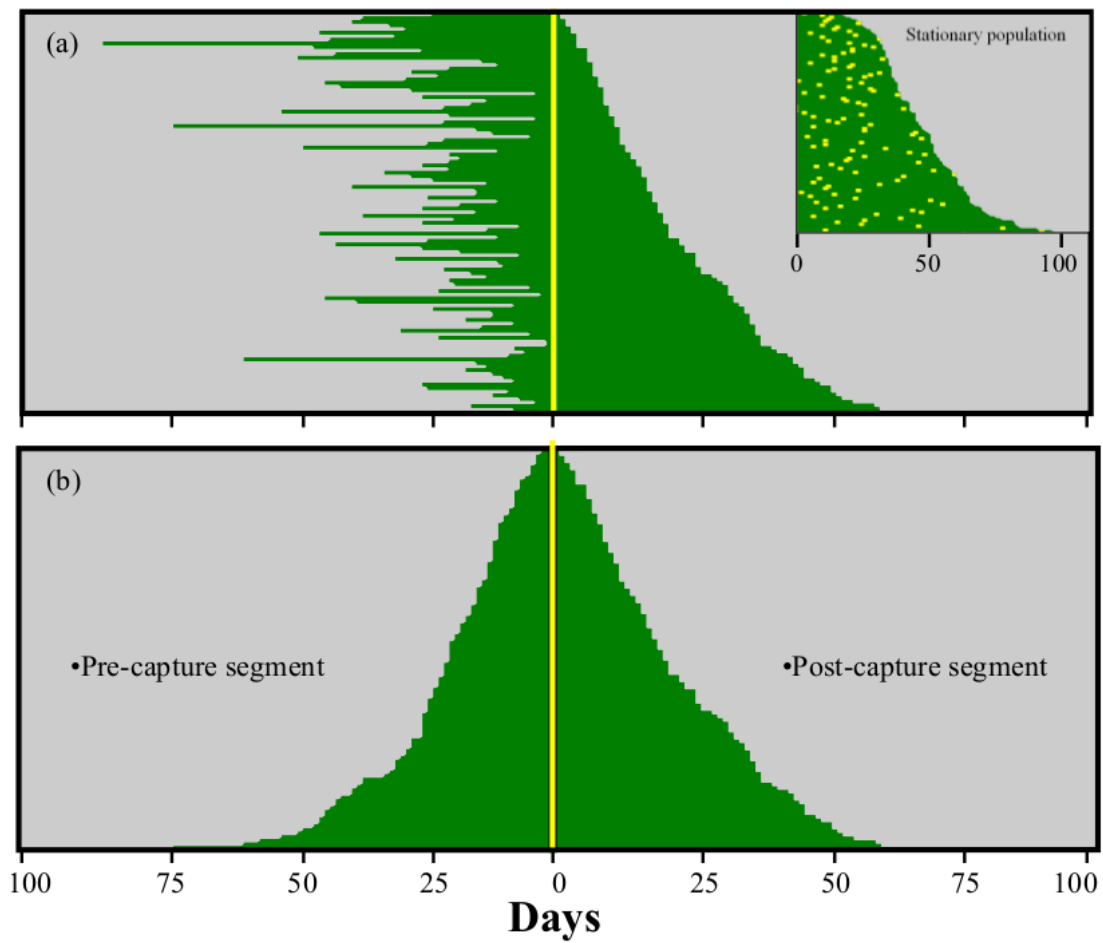}

\caption{\label{main figure}Schematics of a randomly sampled hypothetical
stationary Mediterranean fruit fly population (adapted from \cite{Carey2012}).
Horizontal lines depict the life course of individual flies from birth
(eclosion) to death. (a) Example of the life courses of captive individuals
divided into pre- and post-capture segments and ordered from shortest-to-longest
(top-to-bottom) post-capture lifespans. Inset shows the stationary
medfly population in which the yellow tick marks depict the ages at
which individual medflies are sampled. (b) Same as (a) except with
the pre- and post-capture segments of individuals decoupled and both
ordered according to length from top-to-bottom to show the symmetry
(mirror image) of the distributions. }

\end{figure}

\begin{figure}
\includegraphics[scale=0.8]{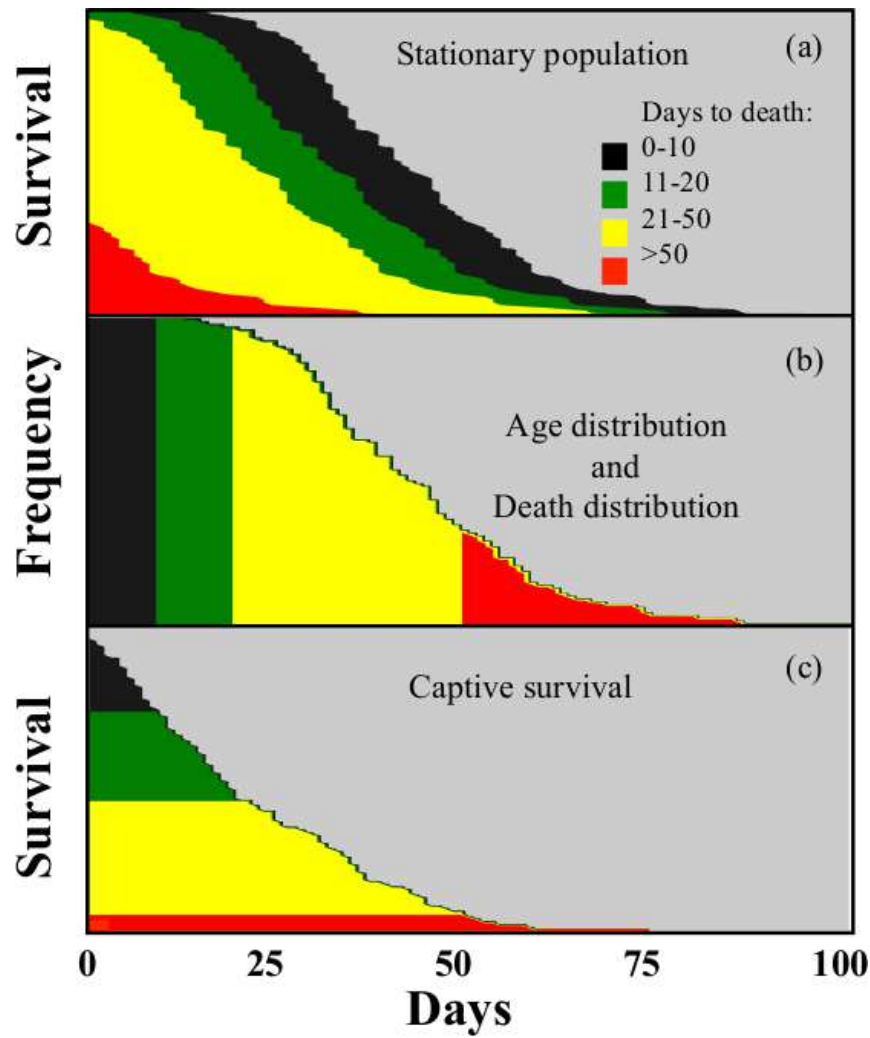}

\caption{\label{fig2}Structural interconnectedness of an hypothetical stationary
Mediterranean fruit fly population showing: (a) days to death by age
groups; (b) age and death distributions; and (c) captive survival. }
\end{figure}

\section{Conclusions}

\textcolor{black}{Our paper offers new sets of tools, techniques and
theoretical framework in terms of visualization of the demographic
data involving capture ages and development of new theoretical ideas
for analyzing data obtained by captive cohorts. Such approaches will
have applications in other demographic situations, for example, understanding
aging patterns in a captive cohort data when information on lives
left is right truncated, projecting various scenarios of demographic
transition, etc, }

The main result of our paper will be useful in understanding the relationship
between average lengths of lives of captured, follow-up, and total
lengths of\textbf{ }the lives in a stationary population\textbf{.}
Our method of re-structuring the follow-up durations of captive cohort
can be adopted also for non-stationary populations, which can be used
for understanding the internal structures of the population with respect
to the age at capture. This will enable us to look deeper into the
aging process of stationary and non-stationary populations.\textbf{
}For each captive cohort of subjects there exists an associated, exact
configuration of a combination of coordinates of the survival graphs,
and this association is dynamic. The right combination of coordinates
is dependent on the formation of the captive cohort. The idea of a
proof through formation of symmetric graphs, combined with captive
age distribution is novel. We have demonstrated the utility of such
thinking in understanding symmetric patterns formed of a captive cohort
and associated follow-up lengths. This strategy was also helpful for
us in deriving formulae for expectation of life in a stationary population,
discretely, and also using multiple integrals. The theory explained
here can be adopted to both human and non-human populations.

\section{Acknowledgements}

We thank the organizers of the Keyfitz Centennial Symposium on Mathematical
Demography sponsored by the Mathematical Biosciences Institute, Ohio
State University, June 2013. Research by JRC supported by NIA/NIH
grants P01 AG022500-01 and P01 AG08761-10.

\end{document}